\newcommand{\E}[0]{\mathbb{E}}
\renewcommand{\P}[0]{\mathbb{P}}
\newcommand{\eat}[1]{}
\renewcommand{\P}{\mathbf{P}}
\renewcommand{\E}{\mathbf{E}}
\begin{document}

\title{Splittable Spanning Trees and Balanced Forests\\ in Dense Random Graphs}

\author{
David Gillman\inst{1} \and
Jacob Platnick\inst{2} \and
Dana Randall\inst{2}
}

\institute{
New College of Florida \\\email{dgillman@ncf.edu}
\and
Georgia Institute of Technology\\\email{jplatnick3@gatech.edu, randall@cc.gatech.edu}
}







\maketitle
\thispagestyle{empty}

\begin{abstract}

We consider the probability that a spanning tree chosen uniformly at random from a graph can be partitioned into a fixed number $k$ of trees of equal size with the removal of $k-1$ edges. In that case, the spanning tree is called {\em splittable}.
Splittable spanning trees are useful in algorithms for sampling {\em balanced forests}, forests whose components are of equal size, and for sampling partitions of a graph into components of equal size, with applications in redistricting, network algorithms, and image decomposition. 
Cannon et al.~recently showed that {spanning trees on} grid graphs and grid-like graphs on $n$ vertices are splittable into $k$ equal sized components with probability  at least $n^{-2k}$, 
leading to the first rigorous sampling algorithm for balanced forests for any class of graphs.
Focusing on the complementary case of dense random graphs, we show that random spanning trees again have inverse polynomial probability of being splittable; specifically, a random spanning tree is splittable with probability at least $n^{(-k/2)}$  for both the $G(n,p)$ and $G(n,m)$ models when $p = \Omega(1/\log n)$, giving the first dense class of graphs where partitions of equal size can be  sampled efficiently.

In addition, we present an infinite family of graphs with properties that have been conjectured 
to ensure splittability (i.e., Hamiltonian subgraphs of the triangular lattice)
and prove that random spanning trees are not splittable with more than an exponentially small probability. As a consequence of this, we show that a family of widely-used Markov chain algorithms for sampling partitions of equal size will fail on this family of graphs if their state spaces are restricted to equal-size partitions.  Moreover, we show these algorithms will be inefficient if their state spaces are generalized to include any unbalanced partitions, suggesting barriers for sampling balanced partitions in other classes of sparse graphs.

\eat{Weighted equitable partitioning of a graph $G$ has been of interest lately due to several applications, including redistricting, network algorithms, and image decomposition. 
Equitably partitioning a graph 
according to the spanning-tree metric, which assigns a weight to each component equal to  its number of spanning trees, and overall weight equal to the product of the weights of its components, has been of mathematical and practical interest because it typically favors partitions with more compact pieces.  
An appealing algorithm suggested by Charikar et al.~is to  
sample a random spanning tree and remove~$k-1$ edges, producing a random forest. 
If the components of the forest form a balanced partition, the partition is equitable under an easily computed acceptance probability.
Cannon et al.~recently showed that {spanning trees on} grid graphs and grid-like graphs on $n$ vertices are splittable into $k$ equal sized pieces with probability  at least $n^{-2k}$, 
leading to the first rigorous sampling algorithm for a class of graphs.
We present complementary results showing that  {spanning trees on} dense random graphs also have inverse polynomial probability of being splittable, specifically with probability at least $n^{(-k/2)}$  for both the $G(n,p)$ and $G(n,m)$ models, giving another class of graphs where equitable partitions can be efficiently sampled exactly.  Interestingly, $G(n,m)$ is the easier case, and we infer the property for random graphs from $G(n,p)$ indirectly.  These proofs also guarantee that the up-down walk for almost uniformly sampling forests without restrictions on the sizes will produce a balanced forest with inverse polynomial probability, giving another provably efficient randomized method for generating equitable partitions.

Further, we show that problems with the well-studied ReCom algorithm for equitable partitioning are more extensive than previously known, even in special cases that were believed to be more promising.  
We present a family of graphs where the Markov chain fails to be irreducible when it must keep the components perfectly equitable;  
yet when the chain is allowed an imbalance of just one vertex between components,
the rejection sampling step may take exponential time.  
Thus, ReCom, which combines just two components at each step, can suffer from splittability as much as the direct algorithm that chooses a spanning tree and removes $k-1$ edges at once. This is true even when the graph satisfies desirable properties that have been conjectured to be sufficient for fast sampling, such as  Hamiltonicity, having low degree, or being an induced subgraph of the triangular lattice.}

\end{abstract}


\section{Introduction}

Sampling balanced forests of a graph 
has garnered a lot of interest lately, especially in the context of political redistricting.
Given an integer $k>1$ and a graph $G=(V,E)$ with $|V|=kn$, the problem is to sample uniformly from the set of forests 
of $G$ consisting of $k$ trees, each of size $n$.
Sampling balanced forests is equivalent to sampling partitions of the vertices of $G$ into $k$ connected components $V=(V_1, \dots, V_k)$ with probability proportional to $\Pi_{i=1}^k w(V_i)$, where the weight $w(V_i)$ is the number of spanning trees of $V_i$. This is known as the {\it spanning tree weight}.

When used for redistricting,  random weighted equitable partitions are sampled to form an ensemble against which proposed maps can be tested for political gerrymandering.
Each vertex of the graph represents a small indivisible region, or precinct, and the components of a partition represent the districts from which representatives are elected.
This methodology has been used for court cases and various amicus briefs \cite{DukeReCom1,charikar2022complexity,zhao2022}.  While the requirements for acceptable political maps vary state to state, mathematicians have generally agreed that the spanning tree weight is a desirable baseline because spanning trees are easy to generate and they typically produce maps with compact districts \cite{procacciaTuckerFoltz2021compactness}; moreover, when other weightings are preferable, parallel tempering has been successfully used to interpolate between the spanning tree weight and other useful metrics \cite{DukeReCom2,DukeReCom1}.
Uniformly generating balanced forests is also desirable in the contexts of clustering in biological and social networks \cite{camilus2012,junker} and image decomposition~\cite{fortunato2010}.  

\subsection{Algorithms for sampling balanced forests}
There are three families of algorithms for sampling balanced forests, all of which rely on a random spanning tree's ``splittability'' into $k$ pieces, which we now define.   
\begin{definition}
    A tree is $k$-splittable if it contains $k-1$ edges that, when removed, create a graph whose $k$ connected components all have the same number of vertices.
\end{definition}

The most widely used algorithm for sampling balanced forests is {\it ReCom}  (for recombination) \cite{recom} and its variants, which provably converge to the distribution defined by spanning tree weight \cite{DukeReCom1,revrecom}\footnote{For simplicity of exposition, we refer to this group of algorithms simply as ``ReCom,'' although our results rely on any of the modified versions which have provably correct stationary distributions \cite{DukeReCom1,revrecom}.} 
ReCom uses Markov chains that walk on balanced forests (or weighted partitions) by iteratively choosing two adjacent pieces, finding a spanning tree on the vertices in the union of these pieces, and then partitioning this joint region into two equal sized pieces by splitting the spanning tree in two, if possible.  When a tree cannot be equitably split by removing a single edge, the move is rejected and the Markov chain proceeds to propose a new recombination move.

For this approach to be efficient and reliable, it is necessary that the recombination move connects the state space of partitions, the chain is rapidly mixing, and subgraphs of $G$ contain splittable spanning trees with at least an inverse polynomial probability. Unfortunately, it has been shown that there are potentially issues with each of these three requirements.  In particular, no graphs except the complete graph are known to be connected under perfectly balanced ReCom moves. Moreover, Charikar et al. gave a family of graphs where strictly following ReCom moves will be slowly mixing
\cite{charikar2022complexity}. 
Natural modifications to ReCom have been proposed to increase its connectivity and potentially its mixing rate by allowing some slack on the piece sizes. Yet, even with this allowance, the state space has only been shown to be connected in special cases~\cite{akitaya2022reconfiguration_recombination,cannontri,charikar2022complexity}. 

Another family of sampling algorithms work by generating uniform spanning trees on all of $G$ (using, for example, the loop erased random walk (LERW) algorithm of Wilson \cite{wilson}) and checking whether each tree is $k$-splittable, which can be done in $O(n)$ time by a depth-first search. Note that any splittable tree has a unique set of $k-1$ edges whose removal leaves a balanced forest (see Lemma~\ref{lem:unique}), so rejection sampling will generate such forests uniformly.  If the tree is not splittable, it is resampled.  This algorithm produces samples from exactly the right conditional distribution, but the running time depends on the rejection rate.  

Markov chains that walk on the set of (not necessarily balanced) forests with $k$ components also exist,  such as the up-down walk \cite{charikar2022complexity}. 
This chain has a $O(n^2\log(n))$ mixing time on all graphs and is known to be ergodic, but again we need to employ rejection sampling whenever the forests are imbalanced.
The remaining balanced forests will be generated almost uniformly and all other forests will be rejected and resampled, so the overall algorithm will be fast only if the rejection rate is not too large. 
Note that the probability that a random tree is splittable is polynomially related to the likelihood a random forest is balanced (see Lemma~\ref{lem:equiv}),
so either both the LERW and up-down walk approaches to sampling balanced forests are efficient or neither will be.

As all of these algorithms rely on the probability that random spanning trees are splittable, it is worthwhile to characterize which classes of graphs have this property.
Charikar et al.~\cite{charikar2022complexity} conjectured that many graphs are splittable with inverse polynomial probability, including rectangular regions of $\mathbb{Z}^2$.
Recently, Cannon, Pegden and Tucker-Foltz \cite{cannon2024grid} gave a stunning proof of this conjecture for planar grid graphs and grid-like graphs.  
Planarity is critical to their argument because they rely on the dual of the loop erased random walk to bound the likelihood of specific families of forests being generated to show that these already occur with at least $1/n^2$ probability, suppressing dependencies on $k$.

Little is known about sufficient conditions to ensure splittability in other classes of graphs. Akitaya et al. \cite{akitaya2022reconfiguration_recombination} have conjectured that Hamiltonicity is sufficient, and Cannon et al. \cite{cannontri} have conjectured that it is sufficient for a graph to be an induced subgraph of the triangular lattice. Little is known as well about sufficient conditions to ensure fast mixing of ReCom. It has been conjectured that it is necessary to allow slack on the piece sizes to guarantee connectivity of the state space \cite{akitaya2022reconfiguration_recombination,cannontri}, or to avoid slow mixing \cite{charikar2022complexity}. 

\subsection{Obstacles to splittability}
While it is tempting to believe the conventional wisdom that there are simple conditions on graphs that are sufficient to ensure their splittability with high probability, such as Hamiltonicity, sparsity, and planarity (e.g.,~\cite{akitaya2022reconfiguration_recombination,cannontri,charikar2022complexity}), we give counterexamples to each of these conjectures.  In particular, in Theorem~\ref{thm:slack} (Section~\ref{sec:split}), we present an infinite family of graphs on $3n$ vertices that are induced Hamiltonian subgraphs on the triangular lattice, where the only equitable partitions of the graph into three connected components are the~$n$ cyclical partitions of the Hamiltonian cycle at $i, \ i+n, \ i+2n,$ for $0\leq i \leq n-1$.  However, the probability of a random spanning tree being splittable on this family is exponentially small.  This implies that all of the algorithms for generating balanced partitions mentioned above, including ReCom, LERWs, and the up-down walk,  all provably require exponential time.

This infinite family of graphs on $3n$ vertices also constitutes a troubling counterexample for ReCom with slack.  In particular, if the state space for ReCom requires the subgraphs to be perfectly balanced, then the state space is disconnected; however, if even the smallest amount of slack is permitted and the state space includes partitions into pieces of size $\{n-1, n, n+1\}$, then ReCom will require exponential time to find a balanced partition.  This example gives strong evidence that the hope to characterize which sparse graphs are splittable will be more challenging than previously believed.

\subsection{Splittability of dense random graphs}
We prove that dense graphs, on the other hand, do enjoy high splittability probability.  We show that if a graph is sufficiently dense, it will be splittable with high  enough probability that we can efficiently generate balanced partitions using either the LERW or the up-down chain.
First, we observe that the mixing time of ReCom on $K_{kn}$ is proportional to the inverse of this probability of being splittable. 
To see this, we can consider a ``bin-mixing'' Markov chain on $k$ bins of $n$ balls each, where a move consists of combining $2$ bins and randomly mixing the balls between them such that each bin ends up with $n$ balls. We see that when running ReCom on the complete graph, each move where it finds a $2$-splittable spanning tree is equivalent to a move on the bin-mixing Markov chain, and each one where it does not is a move where it does nothing. ReCom on this graph can be considered as a lazy version of the bin-mixing Markov chain. 
If the bin-mixing Markov chain mixes in time $M$, then ReCom mixes in time proportional to $M$ divided by the probability that a random spanning tree of $K_{2n}$ is $2$-splittable.

More precisely, we show in Theorem~\ref{thm:complete} that for any constant $k$, the probability that the complete graph $K_{kn}$ is $k$-splittable is 
$$Pr[\text{A uniform spanning tree on\ } K_{kn} \text{\ is splittable}] = \ \Omega(n^{-\frac{k-1}{2}}).$$

\noindent Furthermore, in Theorem~\ref{thm:dense} we prove our main theorem, showing that a polynomial fraction of the set of k-component forests on dense regular graphs will be balanced, for any constant k.  Specifically,  we show that if $N=nk$ and $G$ is a random connected graph chosen from $G(N,p)$ with $p=\Omega(1/\log(n))$, then the probability that a random spanning tree is k-splittable is at least $$Pr[\text{A uniform spanning tree on\ } G \in G(n,p) \text{\ is splittable}] = \ \Omega(n^{-\frac k{2-O(1)}}).$$  The same holds for $G(n,m)$ for random graphs of comparable expected density.
This complements the work of Cannon et al.~\cite{cannon2024grid}, who established a similar result for the m x n grid and a related class of “grid-like” graphs.
Note that Cannon et al.~\cite{cannon2024grid} showed the proportion of spanning trees that are splittable on the $\sqrt{n} \times \sqrt{n}$ grid scales by at least $n^{2-2k}$, so these new bounds on splittability for dense graphs are much more favorable.

\color{black}

These are the first results showing any dense graphs are splittable with sufficiently large probability to use rejection sampling, which is a requirement for each of the simple algorithm, the up-down walk, and ReCom.
Interestingly, the results on $G(n,m)$ are more straightforward to prove. We then derive the corresponding results for $G(n,p)$ from those, rather than the other way around.

One of the challenges to bounding splittability on dense graphs 
is that the spanning trees that appear with higher probability than others are actually {\it less likely} to be splittable.
While all spanning trees are equally likely to appear in a random graph, some trees may be more likely to appear in graphs that contain fewer spanning trees,  making it more likely to be selected.  For example, a star with three leaves is more likely than a path of three edges to be selected as a random spanning tree of a random graph on four vertices. In fact, Alon et al.~\cite{alon20} have shown that the diameter of random spanning trees of dense random graphs grow as $\sqrt{n}$, as was previously shown by Szekeres \cite{szekeres83} for the complete graph.  Trees with smaller diameter {would seem} less likely to be splittable than those with large diameter, so inverse polynomial bounds on the splittability of uniform spanning trees on dense graphs may be more surprising than similar results on sparse graphs such as the grid, where the diameter of a uniform spanning tree is linear.

\section{Preliminaries}

We begin by formalizing some of the important lemmas relating probabilities of weighted partitions, balanced forests, and splittability as well as some of the algorithms that are commonly studied.

\subsection{The spanning tree weight and splittability}
The spanning tree weight is often used to generate equitable partitions of a graph because the conditional distribution on balanced forests is uniform.
\begin{definition}
    The Spanning Tree Weight of a partition $P$ of a graph $G$ is the product of the number of spanning trees in each piece.
\end{definition}
\noindent We will denote it as $\tau(P)$. Similarly, we let $\tau(G)$ be the number of spanning trees in a graph $G$, as it is the weight of the partition of $G$ into a single piece. We also define the number of $k$-splittable spanning trees in~$G$ as $\tau_k(G)$.

\begin{lemma}\label{lem:unique}
    Given any tree $T(V,E)$ with $kn$ vertices, for some integer $k$, there exists at most one set $F\subseteq E$ of $k-1$ edges such that the removal of $F$ from $T$ creates $k$ connected components, each with size $n$.\label{uniqueness_lemma}
   
\end{lemma}
\begin{proof}
    When $k=1$, clearly $F$ can only be the empty set. Assume this lemma holds for splits into at most $k-1$ pieces and proceed by induction. 
    
    If there exist two sets $F_1, F_2$ satisfying the conditions of the Lemma, consider the graph created by contracting the components of $T$ that would result from removing~$F_1$. This is a tree with $k$ vertices, and therefore has a leaf, corresponding to a subtree $S$ of $T$ with $n$ vertices and exactly one edge $e$ between it and $V-S$. $F_2$ cannot contain any edge in $S$, as if it did, it would split~$S$ into multiple connected components in $T-F_2$, only one of which can be connected to $e$, and the other of which would be a proper subset of $S$. A proper subset of $S$ has fewer than $n$ vertices, which is impossible, so $F_2$ cannot contain any edge in $S$.
    
    If $F_2$ contains no edges in $S$, all of $S$ is connected in $T-F_2$. Then to avoid the connected component containing $S$ having more than $n$ vertices, $F_2$ must contain $e$, as otherwise, $S$ is connected and connected to~$e$ in $T-F_2$. 
    Then both $F_1$ and $F_2$ contain $e$, and removing $e$ splits $T$ into a connected component with~$n$ vertices and one with $(k-1)n$ vertices. Then both $F_1$ and $F_2$ contain subsets of $k-2$ edges splitting this other subgraph into pieces of size~$n$, so by induction, they contain the same edges. Then $F$ is unique.
\end{proof}

It will be convenient to talk about both balanced forests and splittable trees.  The following establishes their close relationship.
   A similar result appears in \cite{cannon2024grid}.

\begin{lemma}\label{lem:equiv} 
    For any family $\mathcal{G}$ of connected graphs and constant $k$, the following are equivalent:\begin{enumerate} \item There exists a polynomial $p_k(\cdot)$ such that, for any $G=(V,E)\in \mathcal{G}$, a random spanning tree of $G$ is $k$-splittable with probability at least ${1}/{p_k(|V|)}.$
    \item There exists a polynomial $q_k(\cdot)$ such that, for any $G=(V,E)\in \mathcal{G}$, a random $k$-partition of $G$ weighted by the spanning tree weight is balanced with probability at least ${1}/{q_k(|V|)}$.
    \end{enumerate}   
\end{lemma}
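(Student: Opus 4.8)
The plan is to show that the two probabilities in the statement are, up to polynomial (in $|V|$) factors depending only on $k$, controlled by the same ratio, so that an inverse-polynomial bound on one transfers to the other. Write $N = |V|$ and note that the probability a random spanning tree is $k$-splittable equals $\tau_k(G)/\tau(G)$, while the probability that a random $k$-partition weighted by $\tau$ is balanced equals
\[
\Pr[\text{balanced}] = \frac{\sum_{P \text{ balanced}} \tau(P)}{\sum_{P} \tau(P)},
\]
where the sums range over all connected $k$-partitions $P$ of $G$. The goal is to relate the two numerators to each other and the two denominators to each other.

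The key tool I would use is a contraction identity. For a connected $k$-partition $P = (V_1,\dots,V_k)$, let $G/P$ be the multigraph on $k$ vertices obtained by contracting each piece $V_i$ and keeping every crossing edge. A spanning tree $T$ of $G$ whose $k-1$ crossing edges, once removed, yield exactly $P$ is built by choosing a spanning tree inside each piece ($\tau(P)=\prod_i \tau(G[V_i])$ ways) together with a spanning tree of the contracted graph ($\tau(G/P)$ ways). Counting pairs $(T,S)$ consisting of a spanning tree $T$ and a set $S$ of $k-1$ of its edges in two ways then gives
\[
\tau(G)\binom{N-1}{k-1} = \sum_{P} \tau(P)\,\tau(G/P),
\]
since deleting any $k-1$ edges of a spanning tree leaves a connected $k$-partition. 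Restricting the same bijection to balanced partitions, and using the uniqueness result Lemma~\ref{uniqueness_lemma} to ensure each $k$-splittable tree is counted exactly once, yields
\[
\tau_k(G) = \sum_{P \text{ balanced}} \tau(P)\,\tau(G/P).
\]

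The final ingredient is that the factor $\tau(G/P)$ is polynomially bounded for constant $k$: because $G/P$ has only $k$ vertices and at most $\binom{N}{2}$ edges, each of its spanning trees is determined by its set of $k-1$ edges, so $1 \le \tau(G/P) \le \binom{N}{2}^{k-1} \le N^{2(k-1)}$, the lower bound holding since $G$ connected forces $G/P$ connected. Substituting these two-sided bounds into the two identities sandwiches $\sum_{P \text{ balanced}} \tau(P)$ within a polynomial factor of $\tau_k(G)$, and $\sum_P \tau(P)$ within a polynomial factor of $\binom{N-1}{k-1}\tau(G)$. Dividing the first chain by the second then places $\tau_k(G)/\tau(G)$ between $\binom{N-1}{k-1}N^{-2(k-1)}$ and $\binom{N-1}{k-1}N^{2(k-1)}$ times $\Pr[\text{balanced}]$; as both multipliers are polynomials in $N$ for constant $k$, an inverse-polynomial bound on either probability immediately yields one on the other, giving both implications.

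I do not anticipate a deep obstacle, since the decomposition is exact rather than asymptotic; the only delicate point is the bookkeeping in the two-way count, namely confirming that the crossing edges of a qualifying tree are exactly the deleted set $S$ and that, by Lemma~\ref{uniqueness_lemma}, the balanced split is unique so that no tree is overcounted in the formula for $\tau_k(G)$. Everything else reduces to the elementary observation that a $k$-vertex contracted graph has only polynomially many spanning trees.
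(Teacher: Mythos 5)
Your proof is correct and follows essentially the same route as the paper: both arguments identify partition weights with forest counts, relate spanning trees to $k$-component forests by adding or removing $k-1$ crossing edges (invoking Lemma~\ref{uniqueness_lemma} so that splittable trees are not overcounted in the balanced case), and bound the number of possible crossing-edge sets by $|V|^{2(k-1)}$. The only difference is one of packaging: your exact double-counting identities $\tau(G)\binom{N-1}{k-1}=\sum_P \tau(P)\,\tau(G/P)$ and $\tau_k(G)=\sum_{P\ \mathrm{balanced}} \tau(P)\,\tau(G/P)$ are a sharper, identity-based formulation of the two-sided counting inequalities the paper establishes in its two auxiliary lemmas.
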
\begin{proof}
    We note that the probability that a random spanning tree is $k$-splittable is the ratio of the number of $k$-splittable spanning trees to the number of spanning trees of a graph, and that the probability that a random $k$-partition weighted by the spanning tree weight is balanced is the ratio of the total weight of all balanced $k$-partitions to the total weight of all $k$-partitions. Then as a direct consequence of the next two lemmas, the ratio of the probability that a random spanning tree of $G$ is $k$-splittable to the probability that a random spanning tree weighted $k$-partition of $G$ is balanced is between $n^{-2(k+1)}$ and $n^{3(k+1)}$. 
\end{proof}

\begin{lemma}
    For any connected graph $G$ with $n$ vertices, the ratio between the total weight of all $k$-partitions of $G$ and the number of spanning trees of $G$ is between $n^{-2(k-1)}$ and $n^{k-1}$. \label{first_equiv_lemma}
\end{lemma}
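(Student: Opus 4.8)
The plan is to reinterpret both quantities in the ratio as counts of spanning forests and then compare them through a single double-counting identity. First I would observe that the total weight $\sum_P \tau(P)$, summed over all $k$-partitions $P=(V_1,\dots,V_k)$, counts exactly the spanning forests of $G$ having exactly $k$ components. Indeed, specifying a partition together with a spanning tree of each piece $G[V_i]$ is the same data as specifying a $k$-component spanning forest $F$: the vertex sets of the $k$ trees recover the partition, and each tree is automatically a spanning tree of the induced subgraph on its part. (Parts that fail to induce a connected subgraph contribute $\tau(G[V_i])=0$ and may be ignored, so the correspondence is clean regardless of whether one restricts to connected partitions.) Writing $f_k(G)$ for the number of $k$-component spanning forests of $G$, this gives $\sum_P \tau(P)=f_k(G)$ and $\tau(G)=f_1(G)$, so the goal reduces to bounding $f_k(G)/f_1(G)$.

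Next I would establish the key identity by double counting the pairs $(T,F)$ where $T$ is a spanning tree, $F$ is a $k$-component spanning forest, and $F\subseteq T$ as edge sets. Counting by $T$: since $T$ has $n-1$ edges and deleting any $k-1$ of them leaves a forest with exactly $k$ components, each spanning tree contributes exactly $\binom{n-1}{k-1}$ pairs. Counting by $F$: letting $c(F)$ denote the number of spanning trees of $G$ containing $F$ (equivalently, the number of spanning trees of the graph obtained by contracting each component of $F$), each $F$ contributes $c(F)$ pairs. Equating the two counts yields
\[
f_1(G)\binom{n-1}{k-1}=\sum_{F} c(F),
\]
where the sum ranges over all $k$-component spanning forests.

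Then I would bound $c(F)$ on both sides. Because $G$ is connected the contracted graph is connected, hence has at least one spanning tree, giving $c(F)\ge 1$; and any completion of $F$ to a spanning tree adds exactly $k-1$ edges drawn from $E$, so $c(F)\le\binom{|E|}{k-1}\le\binom{\binom{n}{2}}{k-1}<n^{2(k-1)}$. Substituting into the identity handles both directions at once. For the upper bound, $c(F)\ge 1$ gives $f_k=\sum_F 1\le\sum_F c(F)=f_1\binom{n-1}{k-1}\le f_1\,n^{k-1}\le f_1\,n^{2(k-1)}$. For the lower bound, $c(F)< n^{2(k-1)}$ gives $f_1\le f_1\binom{n-1}{k-1}=\sum_F c(F)\le n^{2(k-1)}\sum_F 1=n^{2(k-1)}f_k$, using $\binom{n-1}{k-1}\ge 1$ (valid for $n\ge k$, the only nonvacuous regime). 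Dividing through by $f_1(G)$ yields $n^{-2(k-1)}\le f_k(G)/f_1(G)\le n^{2(k-1)}$, as claimed.

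I do not expect a genuinely hard step here: the argument is self-contained once the forest interpretation is set up. The only real content is the double-counting identity, so the main thing to get right is the correspondence between spanning-tree-weighted $k$-partitions and $k$-component spanning forests (together with the remark that disconnected parts contribute zero weight, which keeps that correspondence exact). The two-sided estimate on $c(F)$ is routine, relying only on connectivity of $G$ for $c(F)\ge 1$ and on a crude edge-count bound for $c(F)<n^{2(k-1)}$.
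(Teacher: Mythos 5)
Your proof is correct and follows essentially the same approach as the paper: both interpret the total weight of $k$-partitions as the number of $k$-component spanning forests and then compare forests to spanning trees by adding/removing $k-1$ edges, with the bounds $\binom{n-1}{k-1}\le n^{2(k-1)}$ and $c(F)\le n^{2(k-1)}$ (and $c(F)\ge 1$ from connectivity). Your packaging of the two directions into the single double-counting identity $f_1(G)\binom{n-1}{k-1}=\sum_F c(F)$ is just a tidier presentation of the paper's two fiber-counting arguments.
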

\begin{proof} 
The weight of a partition $P$ under the spanning tree weight is the product of the number of spanning trees in each piece. Then this weight is equal to the number of spanning forests whose connected components are the pieces of $P$. 

Consider the space of spanning forests in $G$ with~$k$ connected components.
We see that each such forest can, by adding $k-1$ edges of the components, be made into a spanning tree, and since every spanning tree can be split into $k$ pieces by arbitrarily removing $k-1$ edges, every spanning tree can be made this way. Since there are at most $n^2$ edges in $G$, the number of spanning trees is at most $n^{2(k-1)}$ times the number of such forests. Similarly, since every such forest can be made by removing $k-1$ edges from a spanning tree, and each tree contains $\binom{n-1}{k-1}$ forests with~$k$ connected components, the number of such forests is at most $\binom{n-1}{k-1}<n^{k-1}$ times the number of spanning trees.  \end{proof}
\begin{lemma}
    For any connected graph $G=(V,E)$ with $n$ vertices, the ratio of the total weight of all balanced $k$-partitions of $G$ and the number of $k$-splittable spanning trees of $G$ is between $n^{-2(k-1)}$ and $1$.
\end{lemma}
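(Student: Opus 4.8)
The plan is to mirror the proof of Lemma~\ref{first_equiv_lemma}, but to restrict its forest--tree correspondence to the balanced case and to exploit the uniqueness guaranteed by Lemma~\ref{uniqueness_lemma}. First I would reinterpret the numerator exactly as before: the weight of a balanced $k$-partition $P$ equals the number of spanning forests whose components are the pieces of $P$, so the total weight of all balanced $k$-partitions equals the number $B$ of \emph{balanced} spanning forests of $G$ (those with $k$ components all of size $n/k$). Writing $S=\tau_k(G)$ for the number of $k$-splittable spanning trees, it then suffices to show $n^{-2(k-1)}\le B/S\le n^{2(k-1)}$.

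The key difference from Lemma~\ref{first_equiv_lemma} is that the balanced constraint collapses one direction of the correspondence into a genuine function. Every $k$-splittable tree $T$ splits into a balanced forest, and by Lemma~\ref{uniqueness_lemma} that forest is \emph{unique}; call it $\phi(T)$. Thus $\phi$ maps splittable trees to balanced forests, and the fiber $\phi^{-1}(F)$ consists exactly of the trees obtained by adding $k-1$ edges of $G$ to $F$ that reconnect its $k$ components --- equivalently, the spanning trees of the multigraph $G/F$ formed by contracting each component of $F$ --- since by Lemma~\ref{uniqueness_lemma} any such reconnection produces a tree whose unique balanced split is again $F$.

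Finally I would count by summing fiber sizes. Each fiber $\phi^{-1}(F)$ has size equal to the number of spanning trees of $G/F$, which is at most $|E|^{k-1}\le n^{2(k-1)}$ and, since $G$ (hence $G/F$) is connected, at least $1$. Summing over the $B$ balanced forests yields $B\le S\le n^{2(k-1)}B$, so $n^{-2(k-1)}\le B/S\le 1$, comfortably inside the claimed interval. I expect the only real subtlety to be this invocation of Lemma~\ref{uniqueness_lemma}: it is what turns the many-valued tree-to-forest correspondence of the previous lemma into a single-valued map, letting $S$ be written as a clean sum of fiber sizes; the lower bound of $1$ per fiber needs only connectivity of $G$, and the upper bound is the trivial edge-count estimate, so no unimodality or structural input about balanced forests is required.
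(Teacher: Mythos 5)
Your proof is correct and takes essentially the same approach as the paper's: both use Lemma~\ref{uniqueness_lemma} to turn the tree-to-balanced-forest correspondence into a single-valued map, bound each fiber above by the trivial edge-count $n^{2(k-1)}$, and use connectivity to get at least one splittable tree per balanced forest. Your fiber-summation write-up is just a more explicit rendering of the paper's two-line argument (and, like the paper's, it actually yields the sharper conclusion that the ratio lies in $[n^{-2(k-1)},1]$).
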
\begin{proof}
By Lemma~\ref{uniqueness_lemma}, each $k$-splittable spanning tree uniquely defines a $k$-balanced spanning forest, so the number of them is at most the number of spanning trees. 
Further, each $k$-splittable  tree can be defined by a balanced forest and $k-1$ edges, so there are at most $\binom{|E|}{k-1}<n^{2(k-1)}$ times as many spanning trees as balanced forests. \end{proof}

\subsection{Algorithms for generating balanced forests}
It will be useful to formalize the three main algorithms for sampling balanced partitions.
First, the simplest algorithm is splitting a full spanning tree (see Algorithm~\ref{alg:simple}). This algorithm will run a single stage and either return a balanced partition sampled uniformly according to the spanning tree weight, or fail. When using it in practice, the algorithm will be run until it succeeds in finding a balanced $k$-partition. It is described in \cite{cannon2024grid}. 
The algorithm first generates a random spanning tree. 
Then it generates a partition by splitting the tree into $k$ equal pieces, if that is possible. 
Then it accepts the partition with probability equal to the inverse of the number of spanning trees that would generate that partition. 
The algorithm finds $k-1$ edges in the spanning tree to split it equitably as follows. 
If $k-1$ such edges exist, they are unique, according to Lemma~\ref{uniqueness_lemma}.
Each edge, when removed by itself from the tree, splits the tree into two pieces whose sizes are multiples of $n$,
and this property uniquely characterizes these $k-1$ edges.

\begin{algorithm}[!t]
    \caption{Splitting a Spanning Tree}\label{alg:simple}
\begin{algorithmic}[1]
\Statex    {\bf \underbar{Input:} A connected graph $G$ with $kn$ vertices.}
 \State   Sample a uniform random spanning tree $T$ of $G$ using Wilson's algorithm.
 \State   Initialize the set $K$ of edges to remove from the tree: $K \leftarrow \emptyset$. 
 \State   Run a depth first search on $T$, traversing each edge in turn. 
 For each edge $e$, determine the sizes of the two pieces that would be created by removing $e$ from the tree. If {the sizes are multiples of $n$} add $e$ to $K$.
    \If{$|K| = k-1$}
    \State remove the edges in $K$ from the tree to create a forest;
    \State let $P$ be the partition created by its connected components. 
 \Else  
 \State fail. \EndIf
\State    Let $H$ be the multigraph created by contracting all vertices in each piece in $P$.
\State    Use Kirchhoff's theorem to compute $\tau(H)$.
\State    With probability $1/\tau(H)$, \Return $P$. Otherwise, fail.
\end{algorithmic}
\end{algorithm}

Next, we will examine two Markov-chain based algorithms. 
The up-down forest walk, described in \cite{charikar2022complexity},  moves among forests with $N-k$ edges (see Algorithm~\ref{alg:updown}). 
The number of times~$t$ that the loop is iterated depends on the mixing time, which was shown to be  $O(n^2\log(n))$ in the worst case~\cite{charikar2022complexity}. \color{black} 

\begin{algorithm}[!t]
    \caption{Up-Down Forest Walk}\label{alg:updown}
\begin{algorithmic}[1]
\Statex   {\bf \underbar{Input:} a connected graph $G$ with $kn$ vertices and a forest $F$ on $G$ with $n-k$ edges.} 
\Statex   {\bf Beginning with any forest with $k$ components, repeat:}
\State   Select uniform random edge $e$ from $G$ that spans different connected components in $F$. 
\State Add  $e$ to $F$.
\State Remove a random edge from $F$.
\State\Return $F$.
\end{algorithmic}
\end{algorithm}

The final algorithm we describe is Reversible ReCom as described in \cite{revrecom} (see Algorithm~\ref{alg:recom}). This is also a Markov Chain algorithm that only moves between balanced partitions, so always returns a valid balanced partition, however its mixing time is unknown and sometimes exponential. 

\begin{algorithm}[!t]
    \caption{Reversible ReCom (without Slack)}\label{alg:recom}
\begin{algorithmic}[1]
\Statex {\bf \underbar{Input:} A connected graph $G$ with with $kn$ vertices and a balanced $k$-partition $P=\{P_1,..,P_k\}$.}
\Statex  {\bf Beginning with any partition of $G$ into $k$ components of size $n$, repeat:}
\State Select uniformly at random $2$ pieces $P_i, P_j$. 
 \If{$P_i$ and $P_j$ are not adjacent} return $P$. 
 \Else{} 
\State Sample a uniform random spanning tree $T$ of $P_i\cup P_j$ using Wilson's algorithm.
\State  Run a depth-first search to see if $T$ has an edge $e$ that splits $T$ into $2$ pieces of size $n$.
 \If{such an edge $e$ exists} label the two pieces that are created by removing $e$ as $P_i', P_j'$
\Else{} \Return~$P$.
\EndIf  
\EndIf
\State Let $E(P_i',P_j')$ be the number of edges spanning $P_i',P_j'$.
\State with probability $1/E(P_i',P_j')$, remove 
 $P_i, P_j$ from $P$ and add $P_i',P_j'$.
\State \Return $P$.
\end{algorithmic}
\end{algorithm}

All of the listed Markov Chain algorithms are known to converge to a stationary distribution that selects each forest with equal probability, including both reversible ReCom and Forest ReCom\cite{DukeReCom1,revrecom,charikar2022complexity}. However, ReCom Markov chains may not be irreducible \cite{akitaya2022reconfiguration_recombination}, so may have multiple  stationary distributions depending on the starting partition. Moreover, guarantees on the mixing times of ReCom are unknown, with some cases known to require exponential time\cite{charikar2022complexity}.

\section{Obstacles, even with slack}\label{sec:split}
We now present the family of graphs for which the probability of a random spanning tree being splittable on this family is exponentially small, and which highlight potential problems with ReCom when allowing slack.

\begin{definition}
    A tree is $k$-splittable {\em with slack
    $j$} if there exist $k-1$ edges that, when removed, create a graph whose $k$ connected components have between $n-j$ and $n+j$ vertices, inclusive, for some $n$. 
    Let $\tau^j(G)$ be the number of splittable spanning trees of $G$ with slack $j$. 
\end{definition}
\begin{theorem} 
    There exists an infinite family $\{G_n: n\geq 6, n~\mathrm{even}\}$ of graphs on $3n$ vertices such that each graph $G_n$ \
    \begin{enumerate}
        \item is a Hamiltonian induced subgraph of the triangular lattice,
        \item contains exactly $n$ partitions into three pieces of size $n$, namely the partitions of the $3n$ Hamiltonian paths, 
        \item is such that 
        $\tau^0(G_n) / \tau^1(G_n)$ 
        {is bounded above by an exponential $e^{-\Theta(n)}$.}
    \end{enumerate}
    \label{thm:slack}
\end{theorem}

\begin{corollary}\label{cor}
    There exists an infinite family of graphs on $3n$ vertices, $n \geq 6$, for which ReCom without slack has a disconnected state space and ReCom with slack cannot sample partitions of equal weight in polynomial time.
\end{corollary}

\begin{figure}[t]
    \centering 
    \includegraphics[width=0.24\linewidth, angle=270]{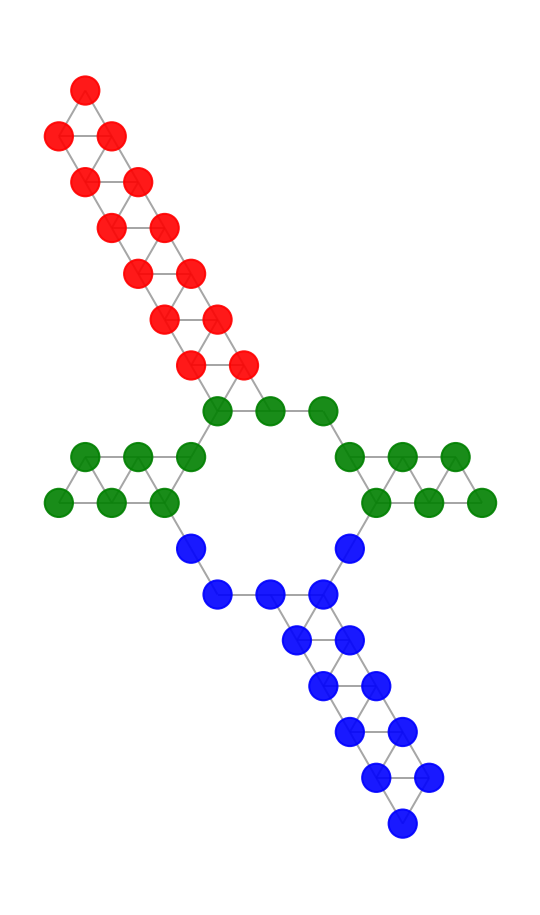}\hspace{.3in} 
    \includegraphics[width=0.24\linewidth, angle=270]{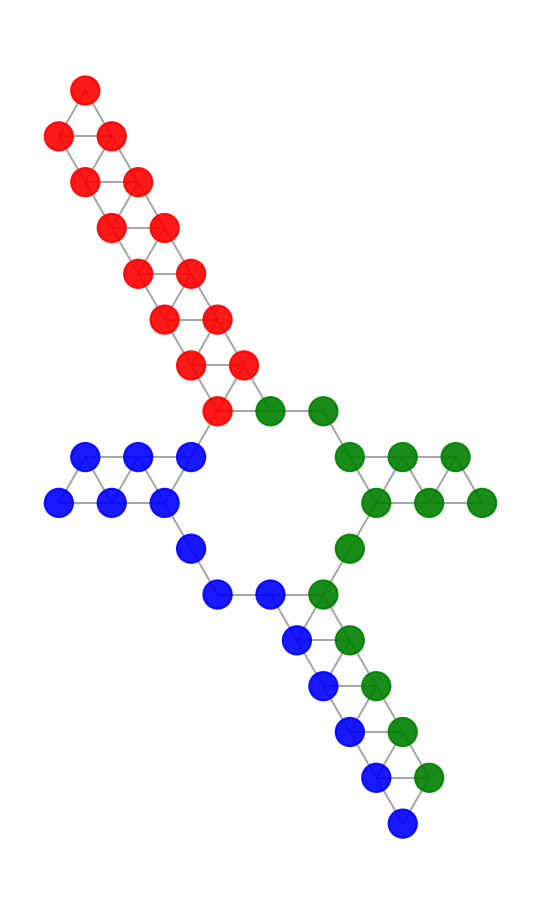}
    \caption{Graph described in Theorem~\ref{thm:slack} with $n=14$:
   (a) An almost balanced partition, with piece sizes $\{13, 14, 15\} $   and (b) an example of a balanced partition, with piece sizes $\{14, 14, 14\}$.}
    \vspace{-1in} \hspace{-.5in}(a)\hspace{2.7in} (b)\hspace{2in}
    \vspace{1in}
    \label{fig:slack-graph}
\end{figure}

First, it will be useful to introduce some structures called {\it ladders}.

\begin{definition}
    A ladder in the triangular lattice is the induced subgraph created by $2$ parallel lines whose endpoints  each neighbor one of the other line's endpoints
\end{definition}

\begin{lemma}
If $L_1$ and $L_2$ are ladders in the triangular lattice consisting of $n_1$ and $n_2$ vertices, respectively, with $n_1>n_2>1$, then $$(3^{n_1-n_2})\; \tau(L_2)\ \geq\ \tau(L_1)\ \geq \ (2^{n_1-n_2})\; \tau(L_2).$$\label{lemma:ladder}
\end{lemma}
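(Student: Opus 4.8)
The plan is to reduce the whole statement to a single per-vertex estimate: that adjoining one vertex to a ladder multiplies its number of spanning trees by a factor lying in the interval $[2,3]$. Writing $t_m$ for the number of spanning trees of the ladder on $m$ vertices, once this is known it suffices to telescope the factors $t_{m+1}/t_m$ over the $n_1-n_2$ steps from $n_2$ up to $n_1$, which immediately yields $2^{\,n_1-n_2}\,\tau(L_2)\le \tau(L_1)\le 3^{\,n_1-n_2}\,\tau(L_2)$, exactly the asserted two-sided bound. The hypothesis $n_2>1$ ensures we never telescope down past the two-vertex ladder, which is the base case of the recursion.

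The first concrete step is to recognize the combinatorial shape of a ladder. Listing its vertices $w_1,\dots,w_m$ from one end to the other (interleaving the two parallel lines), a ladder on $m$ vertices is precisely the graph in which $w_i\sim w_j$ exactly when $|i-j|\le 2$: the rungs join vertices at distance $1$ and the two line-edges join vertices at distance $2$. In particular the ladder on $m+1$ vertices is obtained from the one on $m$ vertices by attaching a new vertex $w_{m+1}$ to the two current endpoint vertices $w_{m-1},w_m$, creating one new triangle. This lets me build the family one vertex at a time.

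I then set up a two-variable recurrence tracking just enough boundary data to control the attachment. Besides $t_m$, let $f_m$ be the number of spanning forests of the $m$-vertex ladder with exactly two components that \emph{separate} the endpoint vertices $w_{m-1}$ and $w_m$. Classifying spanning trees of the $(m+1)$-vertex ladder by how they use the two new edges at $w_{m+1}$ gives $t_{m+1}=2t_m+f_m$: either exactly one of the two new edges is used, extending a spanning tree of the smaller ladder in one of two ways, or both are used, in which case deleting $w_{m+1}$ leaves a two-component forest separating $w_{m-1}$ and $w_m$. A parallel case analysis for the separating forests (distinguishing whether $w_{m+1}$ attaches to $w_{m-1}$ or is left isolated) gives $f_{m+1}=t_m+f_m$, with base values $t_2=f_2=1$ for the single-edge ladder.

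Subtracting the two recurrences yields the key identity
\[
t_{m+1}-f_{m+1}=t_m ,
\]
so $f_m<t_m$ for all $m\ge 3$ while $f_2=t_2$, and $f_m\ge 1$ throughout; hence $f_m/t_m\in(0,1]$ and $t_{m+1}/t_m=2+f_m/t_m\in(2,3]$, which is exactly the per-vertex factor claim feeding the telescoping. The one genuinely delicate point is the combinatorial bookkeeping behind the two recurrences — verifying that $f_m$ is precisely the right boundary statistic so that both relations close, and handling the degenerate case in which the attached vertex becomes an isolated component. Everything after that is a routine product of explicit bounds.
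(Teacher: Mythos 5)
Your proof is correct, and its skeleton matches the paper's: both arguments build the ladder one vertex at a time (the new vertex $w_{m+1}$ attaches to the two previous endpoints $w_{m-1},w_m$), both split spanning trees of the larger ladder according to whether the new vertex has degree one or two, and both telescope a per-vertex factor lying between $2$ and $3$. Where you diverge is in how the degree-two case is controlled. The paper bounds it by an explicit edge-swap injection: a spanning tree using both edges $(w_{m-1},w_{m+1})$ and $(w_m,w_{m+1})$ cannot contain $(w_{m-1},w_m)$, so deleting the two edges at $w_{m+1}$ and inserting $(w_{m-1},w_m)$ maps such trees injectively into spanning trees of the smaller ladder, giving the factor-$3$ upper bound directly. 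You instead name the degree-two count as an auxiliary statistic $f_m$ (two-component spanning forests separating the endpoints), close a joint linear recurrence $t_{m+1}=2t_m+f_m$, $f_{m+1}=t_m+f_m$, and extract $f_m\le t_m$ from the identity $t_{m+1}-f_{m+1}=t_m$. Both recurrences check out (e.g.\ $t_2=f_2=1$, $t_3=3$, $t_4=8$, matching the triangle and $K_4$ minus an edge). Your route costs a little more bookkeeping but buys more: exact counts rather than inequalities, a strict lower bound $\tau(L_1)>2^{n_1-n_2}\tau(L_2)$ for $n_2\ge 3$, and, if one wanted it, the exact growth rate $(3+\sqrt{5})/2$ of $\tau$ per added vertex from the eigenvalues of the transfer matrix $\left(\begin{smallmatrix}2&1\\1&1\end{smallmatrix}\right)$; the paper's injection is shorter and avoids introducing $f_m$ at all. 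One small point worth making explicit in your write-up: the identification of an $m$-vertex ladder with the graph on $w_1,\dots,w_m$ where $w_i\sim w_j$ iff $|i-j|\le 2$ is what justifies speaking of \emph{the} ladder $t_m$ (the paper handles this by remarking that all ladders on $m$ vertices are isomorphic), and it should be stated as covering both the even and odd cases of how the two parallel lines share the $m$ vertices.
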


\begin{proof}
    It is clear that any two ladders with the same number of vertices must be isomorphic graphs. Then consider the ladder $L'$ with $n_1-1$ vertices that is a subset of $L_1$ with one of the end vertices $v$ removed.  That vertex $v$ is adjacent to two other vertices: one on its line and one that is an endpoint of the other line, forming two edges incident to $v$. Then any spanning tree of $L'$ combined with either one of those two edges creates a unique spanning tree of $L_1$, so $\tau(L_1)\geq 2\;\tau(L')$. Then repeating by induction until $L'$ has the same number of vertices as $L_2$, we get $\tau(L_1)\geq (2^{n_1-n_2})\; \tau(L_2)$.
    
    To show $(3^{n_1-n_2})\; \tau(L_2) \geq  \tau(L_1)$, we again consider adding a single vertex $v$ to a ladder $L_2$ to form a longer ladder $L'$. Every spanning tree of $L'$ has either  $1$ or $2$ edges incident to $v$. If it has $1$ edge incident to $v$, each such tree can be constructed with one of the two edges to $v$ and a spanning tree of $L'$, so there are $2\tau(L_2)$ of them. If instead the spanning tree includes both edges incident to $v$, where $u_1$ and $u_2$ are the vertices in $L_2$ that are the other endpoints of these edges. Any spanning tree including both the edges $(u_1,v)$ and $(u_2,v)$ cannot contain $(u_1,u_2)$, so removing $(u_1,v)$ and $(u_2,v)$ and adding $(u_1,u_2)$ creates a spanning tree of~$L_2$. Then each spanning tree of~$L'$ can be constructed by removing $(u_1,u_2)$ and adding $(u_1,v)$ and $(u_2,v)$ to some spanning tree of~$L'$, so by induction we get $3^{n_1-n_2}\tau(L_2) \geq \tau(L_1).$
\end{proof}

We can now proceed with the construction of the family of graphs discussed in Theorem~\ref{thm:slack}.

\vspace{.1in}
\begin{proof}[Proof of Theorem~\ref{thm:slack}]
We will construct the family of graphs ${G_n}$ depicted in Figure~\ref{fig:slack-graph}. For any $n$, $G_n$ consists of a hexagon with $2$ edges on each side, attached to $4$ ladders. For any even $n$, $n \geq 6$, we can expand the ladders such that $G_n$ contains $3n$ vertices, with the red piece having a ladder of length $n-1$; the green piece containing~$7$ vertices of the hexagon and each of its two ladders containing an additional $(n-6)/2$ vertices, for a total of $n+1$ vertices; and the blue piece containing $5$ vertices of the hexagon and an additional $n-5$ vertices on its ladder. This is always a valid almost-balanced partition into pieces of size $\{n-1, n, n+1\}$. 
This graph contains a Hamiltonian cycle that traverses the hexagon clockwise, taking a clockwise detour out and back along each ladder.

To show property (2) of the theorem, we claim that the only way to split this graph into three connected pieces of size $n$ is to split the Hamiltonian cycle into three paths of size $n$. This will follow from the fact that each ladder has fewer than $n$ vertices. 
Consider any way to partition $G_n$ into three connected pieces, $A, B, C,$ of size $n$. If one of the pieces, $A$, is not a path along the Hamiltonian cycle, then it consists of at least two disconnected paths along the Hamiltonian cycle. Traversing the Hamiltonian cycle clockwise, between the last vertex of one of these paths and the first vertex of the next path, there is a nonempty set of vertices $S$ in $B \cup C$. $S$ cannot contain a vertex of the hexagon, because in that case $A$ would be disconnected. But $S$ also cannot be a subset of one of the ladders, because in that case $|S| < n$ and $S$ would be disconnected from the rest of $B \cup C$. This is a contradiction.

We now show property (3) of the theorem.
Given a partition, say a ladder is ``monochromatic'' if it is contained in one piece of the partition.
Consider the almost balanced partition shown in Figure~\ref{fig:slack-graph}(a). There are four maximal monochromatic ladders that contain $3n-6$ vertices in total. 

Now consider any balanced partition $P$ of $G_n$ into induced subgraphs $P_1, P_2, P_3$. Consider the subgraph~$L_i$ of $P_i$ induced by the maximal ladders in $P_i$. 
By the discussion above, $P_i$ consists of components of $L_i$ connected in a chain by paths. Therefore, $\tau(P_i)$ = $\tau(L_i)$. 
By Lemma~\ref{lemma:ladder} it will be enough to show that the monochromatic ladders of $P$ contain $3n - \Omega(n)$ vertices total.

There are four ladders and only three pieces, so either there are fewer than four monochromatic ladders, in which case we are done, or one piece $P_i$ of the partition contains two ladders. Those two ladders must be adjacent in the clockwise traversal of the Hamiltonian cycle. 
Any pair of adjacent ladders in $G_n$ contain $n + \Omega(n)$ total vertices, of which $P_i$ must be missing $\Omega(n)$.
\end{proof}

\begin{proof}[Proof of Corollary~\ref{cor}]
    When using ReCom (or its variants) without slack on the family of graphs constructed for the proof of Theorem \ref{thm:slack}, Property (2) implies that no move is possible, as any piece's position along the Hamiltonian cycle determines the positions of the other two pieces. Hence, any move splitting and recombining two of them cannot modify the partition, and ReCom fails to make any nontrivial moves.
    As the graph can only be partitioned along the Hamiltonian cycle, there are only $n$ balanced partitions of the graph up to relabeling. {We also see that when allowing any slack, we must allow the partitions with the smallest amount of slack: those with piece sizes $n-1,n,n+1$. Then any sample made from ReCom with slack is exponentially more likely to be unbalanced than balanced, as the example high-weight unbalanced partition described in theorem \ref{thm:slack} is clearly reachable as a ReCom move from a balanced partition along the Hamiltonian cycle.}
\end{proof}

\section{Splittability of Dense  Graphs}\label{sec:dense}

\begin{figure}[t]
    \includegraphics[width=0.45\linewidth]{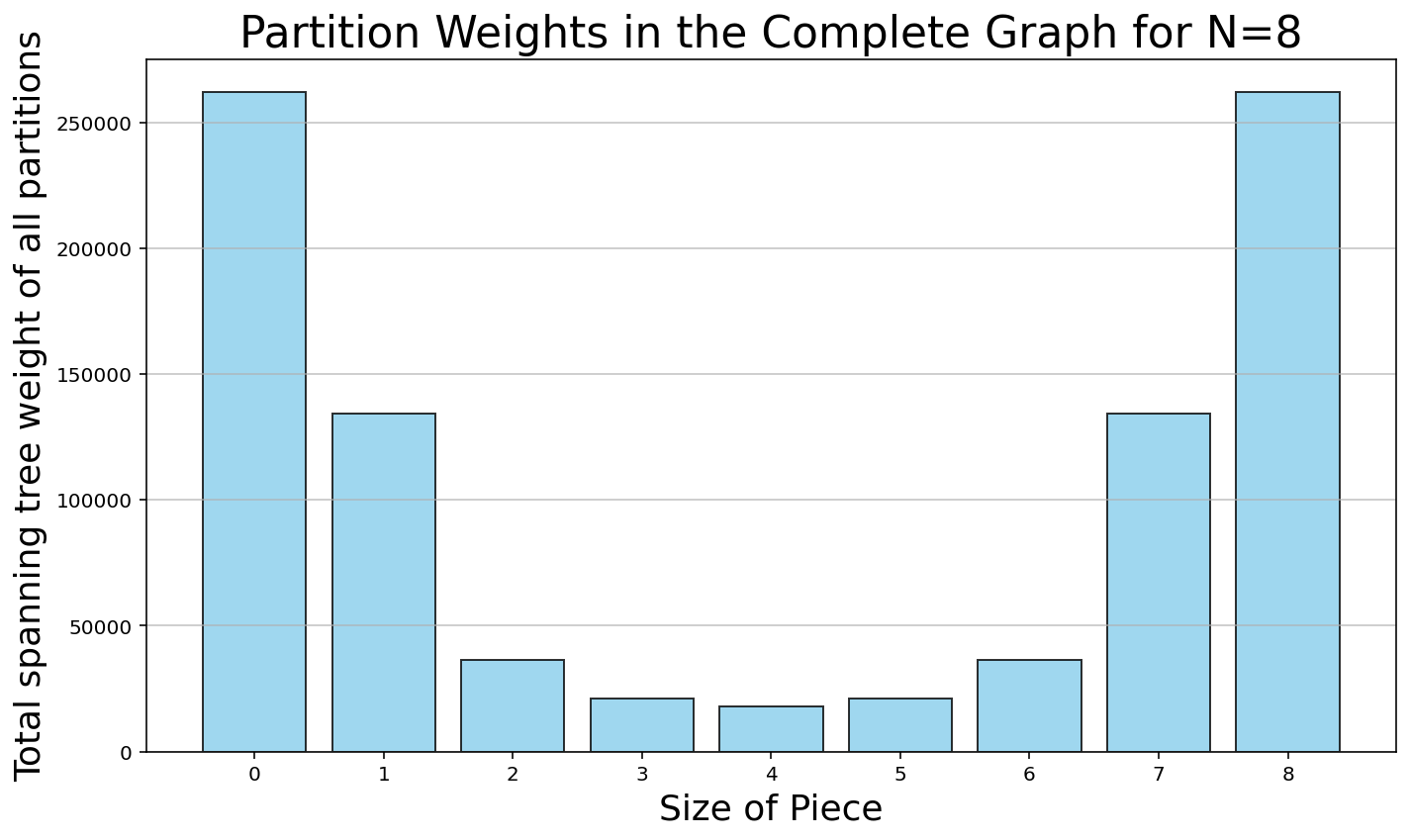}\hspace{.3in} 
    \includegraphics[width=0.45\linewidth]{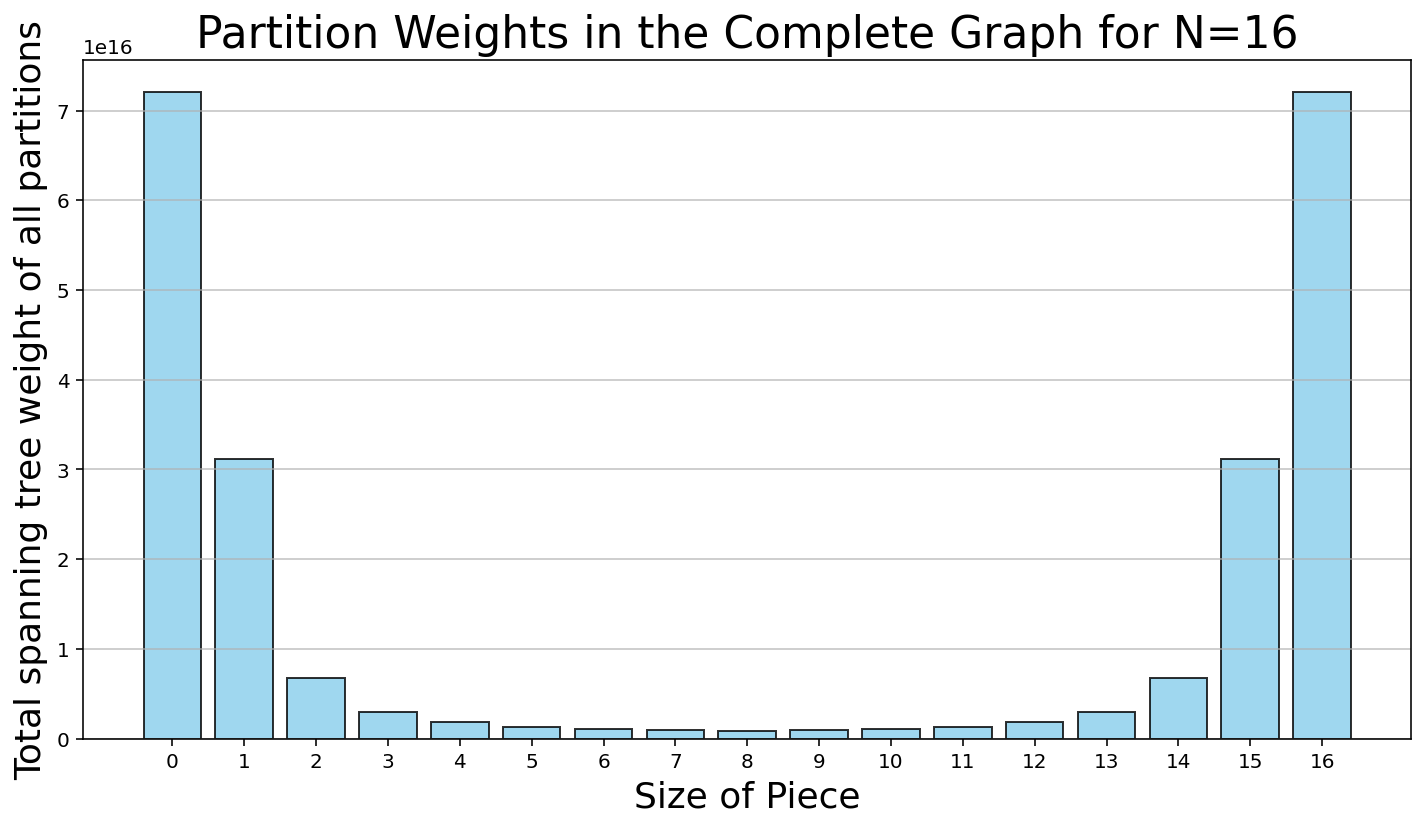}
    \caption{Total spanning tree weights of all Partitions of size $i,N-i$ on the Complete Graph with $N=8$ (a) and $N=16$ (b).}
    \vspace{-.85in} \hspace{-.3in}(a)\hspace{3.1in} (b)\hspace{2.5in}
    \vspace{1in}
    \label{fig:weights}
\end{figure}

As Section~\ref{sec:split} demonstrates, even the smallest amount of slack can cause balanced partitions to be exponentially suppressed at stationarity.  This potentially poses a problem for any algorithms sampling spanning trees until a splittable one is found.  In this section we focus on Algorithms~\ref{alg:simple} and~\ref{alg:updown}, that are efficient only when balanced partitions (or balanced forests) are not too unlikely compared to the total weight where balance is not required.

\subsection{Splittability of the complete graph}

We now show that in the complete graph a polynomial fraction of spanning trees are splittable. 
This is equivalent to showing that a polynomial fraction of partitions of the complete graph are balanced (Lemma~\ref{lem:equiv}), which implies that we can find a balanced partition of the complete graph in polynomial time by generating random spanning trees and using rejection sampling.

Balanced forests tend not to be the mode of a histogram showing the probabilities of splits of various sizes. On the complete graph, balanced splits contain the fewest forests out of all possible partition sizes, as shown in Figure~\ref{fig:weights}. This means that standard techniques showing that this distribution is log-concave or unimodal will certainly fail and the calculations need to be more precise.

Instead of such combinatorial techniques, which are less likely to apply here, we use precise calculations to determine the probability of splittability on the complete graph. For random graphs, we  use conditional probability on the number of edges to analyze settings similar to the complete graph, create precise calculations there, and use our results about the complete graph to draw conclusions about the probability of splittability in more general random graphs.
 
In the next subsection we will extend this result to dense random graphs. 
We first  prove our result on the complete graph  because we will use it in the proof of the stronger (more general) result that follows.
Recall Theorem~\ref{better_complete}, given as stated in the introduction:

\begin{theorem}\label{thm:complete}\label{better_complete}
    Let $N = nk$. 
    The probability that a random spanning tree of $K_N$ is $k$-splittable is $C_k n^{-\frac {k-1}{2}}$,     where $C_k$ depends only on $k$.
\end{theorem}

\begin{proof}
By Cayley's Theorem\cite{Cayley} the number of spanning trees in $K_N$ is $N^{N-2}$. Each splittable spanning tree can be split uniquely into balanced pieces (Lemma~\ref{uniqueness_lemma}); therefore, each one is uniquely defined by the partition it creates, the spanning trees on the $K_{n}$s in those partitions, and the $k-1$ edges between partitions. 
The number of unlabeled $k$-partitions of $N$ vertices is $$\frac{\binom{N}{n,n,..,n}}{k!} \ = \ \frac{(kn)!}{(n!)^k k!}.$$ 

The edges between the pieces in the partition can be considered a spanning tree on a separate graph whose nodes are the $k$ pieces of the partition. By another application of Cayley's Theorem, there are $k^{k-2}$ such spanning trees. There are $n^2$ ways to choose each of the $k-1$ edges, because each edge connects two components of size $n$. So, given the partition, there are $k^{k-2}(n^2)^{k-1}$ ways to choose the edges between the pieces. 
Again by Cayley's Theorem, the number of ways to choose spanning trees of $k$ copies of $K_n$ is $(n^{n-2})^k$.
Putting these together, the number of $k$-splittable spanning trees $\tau_k(K_N)$ of $K_N$ is 
$$\tau_k(K_N) \ = \ (n^{n-2})^k \frac{(n^2)^{k-1}k^{k-2}}{k!}\frac{(kn)!}{(n!)^k}.
$$
Dividing by the number of spanning trees in $K_N$ and applying Stirling's formula ~\cite{abramowitz} to each of the factorials in this expression, the probability a spanning tree is splittable is


\begin{align*}
     \frac{\tau_k(K_{kn})}{{(kn)}^{kn-2}} 
    \ &= \ \left((1+O(\frac1n)) \sqrt{2\pi n}(\frac{e}{n})^n)^k~
    (1+O(\frac 1k)\right) \\
   & \hspace{.7in} \times \left(\sqrt{2\pi k} (\frac{e}{k})^k
    (1+O(\frac1{kn})) (\sqrt{2\pi kn} (\frac{kn}{e})^{kn}\right)
    \left( \frac{(n^{n-2})^k(n^2)^{k-1}k^{k-2}}{{(kn)}^{kn-2}} \right)    \\
    &= \ (1+o(k))n^{-\frac {k-1}2}(\frac{e}{\sqrt{2\pi}})^k.
\end{align*}

\noindent Setting $C_k = (1+o(k))(\frac{e}{\sqrt{2\pi}})^k$, the theorem follows.
\end{proof}

We note that our bound for the probability that a random spanning tree is $k$-splittable on the compete graph  is $\Omega(1/n^{(k-1)/2})$, better than the best known bound of $\Omega(n^{2-2k})$ on the grid. 
However, Algorithm~\ref{alg:simple}, which generates a $k$-partition from a spanning tree, uses rejection sampling twice; once when determining if a spanning tree is $k$-splittable, and once to account for the fact that each $k$-balanced partition may be generated by many different spanning trees. This second step of rejection sampling is inefficient in the complete graph, 
with an acceptance probability of $1/(k^{k-2}n^{2k-2})$.


\subsection{Dense random graphs}
We now show that in a dense random graph a polynomial fraction of spanning trees are splittable.  
This is equivalent to showing that a polynomial fraction of partitions of a dense random graph are balanced (Lemma~\ref{lem:equiv}), which implies that we can find a balanced partition of a dense random graph in polynomial time by generating random spanning trees and using rejection sampling.

While each spanning tree (with the nodes labeled) is equally likely to appear in a random graph, it is not true that any two spanning trees are equally likely to be selected as a random spanning tree of a random graph. The probability that any given spanning tree is selected as a random spanning tree is proportional to $\E[1/\tau(G)\,|\,T\in G]$, which varies for different spanning trees. 
Spanning trees that occur in graphs that contain many other spanning trees are less likely to be selected. 

For example, it is straightforward to show that if we choose a random graph on four vertices, then choose a random spanning tree of the graph, a star with three leaves is more likely than a path of three edges. 
\eat{We can see that adding $1$ edge to a path of $4$ vertices can create a graph with $3$ or $4$ spanning trees, while adding $1$ to a star graph always creates a graph with $3$ spanning trees. Since adding $2$ or more edges always creates $K_4$ or $K_4$ minus one edge, we see that}
In short, this is because random graphs containing a given star graph have fewer spanning trees than those containing given path graph.
\eat{, so stars appear more often in random spanning tres of random graphs on $4$ vertices than they do in random labeled trees on $4$ vertices.} 
Based on numerical results, this same trend persists into star and path graphs on $n$ vertices.

Our main theorem is as follows: 

\begin{theorem}\label{thm:dense}
    Let $N = nk$. Let $G$ be a random connected graph in either the $G(N,p)$ model or the $G(N,m)$ model with $m=p\binom{N}{2}, p = \Omega(1 / \log n)$. Then the probability that a random spanning tree of $G$ is $k$-splittable is at least $C_k n^{-\frac{k}{2} - O(1)}$,
    where $C_k$ depends only on $k$.
\end{theorem}

We first prove that a random spanning tree in a connected graph in the $G(N,m)$ model is splittable with probability at least 
$n^{-\frac{k}{2}-O(1)}$ when $m = \Omega( \binom{N}{2} / \log n )$. 
We then use that result to prove the corresponding result for the $G(N, p)$ model
when $p = \Omega( 1 / \log n )$.
In the statements of both theorems we fix $p = \Omega(1 / \log n)$,
and in the first theorem we set $m = p \binom{N}{2}$.
We note that in the $p = O(1 / \log n)$ regime, the $O(1)$ in the exponent will depend on $p$, 
whereas in the $p = \Omega(1)$ regime, the $O(1)$ in the exponent will disappear,  
and~$C_k$ will depend on $p$ as well as $k$.

\begin{theorem} 
Let $N = nk$. 
Let $G$ be a random connected graph of $m = p \binom{N}{2}$ edges on $N$ nodes. 
Let $T_0$ be a random spanning tree of $G$.
If $m = \Omega(\binom{N}{2} / \log n)$, 
then the probability that $T_0$ is $k$-splittable is at least 
$C_k n^{-\frac{k}{2}-O(1)}$,
where $C_k$ depends on $k$.
More precisely, the probability is at least $C_k n^{\frac{1-k}{2}}e^{-(2+o(1))/p}$, an increasing function of $p$.
\label{mn}
\end{theorem}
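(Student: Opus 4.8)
The plan is to compute the splittability probability as a weighted sum over splittable trees and compare it term-by-term to the uniform situation on $K_N$ that Theorem~\ref{better_complete} already resolves. Write $\pi(T)$ for the probability that the procedure (pick the random connected graph $G$, then a uniform spanning tree of $G$) returns a given labeled tree $T$. Conditioning on $G$ gives
$$\pi(T) = (1+o(1))\,\E_G\Big[\frac{\indicator[T\subseteq G]}{\tau(G)}\Big],$$
where the $(1+o(1))$ absorbs the restriction to connected graphs: since $p=\Omega(1/\log n)$ sits far above the connectivity threshold $\log N/N$, a $G(N,m)$ graph is connected with probability $1-o(1)$, and conditioning on $T\subseteq G$ forces connectivity, so the factor is uniform in $T$. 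Then $\P[\text{splittable}]=\sum_{T\in\mathcal{T}_k}\pi(T)$, where $\mathcal{T}_k$ is the set of $k$-splittable labeled trees. On $K_N$ every tree has $\pi(T)=N^{-(N-2)}$, and Theorem~\ref{better_complete} is exactly the statement $|\mathcal{T}_k|/N^{N-2}=C_k n^{(1-k)/2}$. Hence it suffices to prove a uniform lower bound $\pi(T)\ge \big(p\,e^{-(2+o(1))/p}\big)N^{-(N-2)}$ for splittable $T$ and sum. Crucially, because all $\pi(T)$ are positive I only need a per-tree \emph{lower} bound, so no concentration or upper bound on $\tau(G)$ is required.

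The core estimate uses Jensen's inequality on the conditional law of $\tau(G)$: with $X=\indicator[T\subseteq G]$,
$$\pi(T) = (1+o(1))\,\P[T\subseteq G]\,\E\Big[\frac{1}{\tau(G)}\,\Big|\,T\subseteq G\Big] \ \ge\ (1+o(1))\,\frac{\P[T\subseteq G]}{\E[\tau(G)\mid T\subseteq G]}.$$
The numerator is identical for every tree: it is the hypergeometric quantity $\prod_{i=0}^{N-2}\frac{m-i}{M-i}$ with $M=\binom{N}{2}$, and expanding its logarithm gives $\P[T\subseteq G]=p^{N-1}e^{-1/p+O(1)}$, where $e^{-1/p}$ is the genuine finite-$m$ correction relative to the naive $p^{N-1}$.

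The denominator is where the shape of $T$, and the difficulty, enters. Expanding $\tau(G)=\sum_{T'}\indicator[T'\subseteq G]$,
$$\E[\tau(G)\mid T\subseteq G] = \sum_{T'\in\mathcal{T}}\P[T'\subseteq G\mid T\subseteq G] = p^{N-1}e^{-1/p+O(1)}\sum_{T'\in\mathcal{T}}p^{-|T'\cap T|},$$
so everything reduces to the weighted tree count $W(T):=\sum_{T'}p^{-|T'\cap T|}$, i.e. the spanning-tree partition function of $K_N$ with weight $1/p$ on the edges of $T$ and $1$ elsewhere. Writing $W(T)=N^{N-2}\,\E_{T'}\big[p^{-|T'\cap T|}\big]$ and invoking negative association of the edge indicators of a uniform spanning tree (each edge present with marginal $2/N$), I obtain the \emph{uniform} bound
$$\E_{T'}\big[p^{-|T'\cap T|}\big] \ \le\ \prod_{e\in T}\E\big[p^{-\indicator[e\in T']}\big] = \Big(1+\tfrac{2}{N}(\tfrac1p-1)\Big)^{N-1} \le e^{2/p},$$
valid for every tree and in particular every splittable one. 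Combining the three displays gives $\pi(T)\ge e^{-2/p+O(1)}N^{-(N-2)}$ uniformly; summing over $\mathcal{T}_k$ and applying Theorem~\ref{better_complete} yields $C_k n^{(1-k)/2}e^{-(2+o(1))/p}$, and the stated extra factor of $p$ is what remains after carrying the $e^{\pm1/p+O(1)}$ hypergeometric corrections and the connectivity conditioning through at the claimed precision.

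I expect the main obstacle to be this denominator estimate, i.e. controlling $W(T)$ uniformly over splittable trees. The clean exponent $2/p$ hinges on replacing the joint law of $\{e\in T'\}_{e\in T}$ by a product, which is precisely where negative association of uniform-spanning-tree edges is needed; without it, the overlap $|T'\cap T|$ could a priori have a heavier upper tail for structured, bridge-heavy splittable trees and inflate $\E[\tau(G)\mid T\subseteq G]$. A secondary but fiddly point is that these are finite-$m$ falling-factorial computations rather than independent-edge ones, so the $e^{\pm1/p}$ corrections in the numerator and denominator must be tracked consistently to land on the advertised $p\,e^{-(2+o(1))/p}$.
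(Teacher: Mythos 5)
Your route is genuinely different from the paper's. The paper never looks at per-tree selection probabilities: it lower-bounds the splittability ratio by $\tau_k(G)/\tau_{max}(m,N)$, where $\tau_{max}(m,N)\le\frac1N\left(\frac{2m}{N-1}\right)^{N-1}=(pN)^{N-2}$ is Grimmett's deterministic extremal bound, and then computes $\E[\tau_k(G)]$ by linearity: the number of splittable trees from Theorem~\ref{better_complete} times the falling-factorial estimate $\P[T\subseteq G]\ge p^{N-1}e^{-(2+o(1))/p}$. The lone factor of $p$ in the refined bound is exactly what survives of $p^{N-1}/p^{N-2}$ there; it has a concrete origin. You instead keep $\tau(G)$ random, apply Jensen to get $\pi(T)\ge\P[T\subseteq G]/\E[\tau(G)\mid T\subseteq G]$, and control the conditional first moment by an overlap-weighted tree count, bounded uniformly by $N^{N-2}e^{2/p}$ via negative association of uniform-spanning-tree edges. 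That last idea is sound (UST measures are negatively associated, the $K_N$ edge marginal is $2/N$, and the product bound for increasing functions of disjoint coordinates follows), and it is an honest substitute for Grimmett's inequality that could generalize to settings where no extremal bound is available.

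However, there is a genuine gap exactly at the step you flagged as fiddly. The claimed identity $\E[\tau(G)\mid T\subseteq G]=p^{N-1}e^{-1/p+O(1)}\sum_{T'}p^{-|T'\cap T|}$ is false: the finite-$m$ correction to $\P[T'\subseteq G\mid T\subseteq G]$ is not uniform in $T'$. It equals $1$ when $T'=T$, and it is $e^{-(3+o(1))/p}$ when $T'$ is edge-disjoint from $T$, because conditioning on $T\subseteq G$ consumes $N-1$ of the $m$ edges and makes disjoint trees less likely. The only one-sided bound that is rigorous without further work is: since $(m-a)/(M-a)$ is decreasing in $a$, each factor is at most $p$, so $\E[\tau(G)\mid T\subseteq G]\le p^{N-1}\sum_{T'}p^{-|T'\cap T|}\le p^{N-1}N^{N-2}e^{2/p}$, with no helpful $e^{-1/p}$ factor to cancel against the numerator. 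Carrying this through gives $\pi(T)\ge e^{-(3+o(1))/p}N^{-(N-2)}$ and hence $\P[\textrm{splittable}]\ge C_k n^{\frac{1-k}{2}}e^{-(3+o(1))/p}$. This still proves the headline claim $C_k n^{-\frac k2-O(1)}$, since $e^{-3/p}=n^{-O(1)}$ when $p=\Omega(1/\log n)$, but it is strictly weaker (for $p=o(1)$) than the theorem's refined bound $C_k n^{\frac{1-k}{2}}p\,e^{-(2+o(1))/p}$, because $e^{-1/p}\ll p$; and your closing attribution of the factor $p$ to "corrections and connectivity conditioning" is not backed by any computation. To get a sharp constant along your route you would have to stratify the sum over $T'$ by overlap size and use that high-overlap trees are rare---precisely the refinement that the uniform factoring skips.
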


\begin{proof}
Let $\tau_k(G)$ be the number of spanning trees of $G$ that are $k$-splittable. 
Let $\tau_{max}(m, N)$ be the maximum number of spanning trees in a graph of $m$ edges on $N$ vertices.
For a fixed $G$ the probability that $T_0$ is $k$-splittable is 
\begin{align*}
    \frac{\tau_k(G)}{\tau(G)} \geq \frac{\tau_k(G)}{\tau_{max}(m, N)}
\end{align*}
For random connected $G$, the probability that a random spanning tree is $k$-splittable is the expected value of this ratio,
\begin{align*}
    \E\left[\frac{\tau_k(G)}{\tau(G)}\,|\,G\text{ connected}\right] \ \geq \ \frac{1}{\tau_{max}(m, N)} \E[\tau_k(G)\,|\,G\text{ connected}] 
\end{align*}
We can write the right-hand side of this inequality as a sum over spanning trees $T$ of $K_N$, and note that when removing the condition that $G$ be connected, $\P[T \in G]$ only gets smaller, 
because disconnected graphs have no spanning trees.  Then
\begin{align*}
\E\left[\frac{\tau_k(G)}{\tau(G)}\,|\,G\text{ connected}\right] & \ \geq \ \frac{1}{\tau_{max}(m, N)} \sum_{T ~k\mathrm{-splittable}} \P[T \in G\,|\,G\text{ connected}] \\
&\ \geq \ \frac{1}{\tau_{max}(m, N)} \sum_{T ~k\mathrm{-splittable}} \P[T \in G] \\
        & \ = \ \frac{C_k}{\tau_{max}(m, N)} N^{N-2} n^{\frac{1-k}{2}} \P[T \in G],
\end{align*}
from Theorem~\ref{better_complete}.

Grimmett \cite{grimmett_upper_1976} has shown that 
$\tau_{max}(m, N) \leq \frac{1}{N}(\frac{2m}{N-1})^{N-1} = p^{N-1}N^{N-2}$; 
therefore, the probability that our tree $T_0$ is $k$-splittable is at least 
\begin{align*}
    \frac{C_k}{p^{N-1}N^{N-2}} N^{N-2} n^{\frac{1-k}{2}} \P[T \in G]
    \ = \ \frac{C_k}{p^{N-1}}~n^{\frac{1-k}{2}}~ \P[T \in G],   
\end{align*}
after absorbing a factor of $k$ into $C_k$.

It will now suffice to show that $\P[T \in G] \geq p^{N-1} e^{-(2+o(1))/p}$, because $pe^{-(2+o(1))/p} = n^{-O(1)}$. The probability that any set $\{e_1,...,e_{N-1}\}$ of $N-1$ edges is in $G$ is 
\begin{align*}
    \prod_{i=1}^{N-1} \P[e_i\in G\,|\,e_j\in G\,\,\forall j<i] \ &=\ 
    \prod_{i=0}^{N-2}\frac{m-i}{\binom{N}{2}-i} \\
    &\ \geq \ \left(\frac{m-N+2}{\binom{N}{2}}\right)^{N-1} \\
    &\ =\  \left(p - \frac{2(N-2)}{N(N-1)}\right)^{N-1} \\
    &\ \geq \ p^{N-1} \left(1 - \frac{2}{p(N-1)}\right)^{N-1} \\
    & \ = \ p^{N-1} e^{-(2+o(1))/p},
\end{align*}
which proves the theorem.
\end{proof}
To prove the corresponding theorem for $G(N, p)$, we will use the following lemma.
\begin{lemma}
    Let $G$ be a random graph in the $G(N,p)$ model conditioned on being connected.
    If $p \leq 1/2$, then with probability at least $1/3$, $G$ has at least $p\binom{N}{2}$ edges.
\end{lemma}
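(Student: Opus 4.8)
The plan is to reduce the conditional statement to an unconditional tail bound on the edge count, exploiting the fact that both the conditioning event and the target event are monotone increasing. Write $M = \binom{N}{2}$ and let $X$ denote the number of edges of $G$ \emph{before} conditioning; under $G(N,p)$ the variable $X$ is Binomial with parameters $M$ and $p$, so its mean is $\mu = pM = p\binom{N}{2}$. Let $A$ be the event that $G$ is connected and let $B$ be the event $\{X \ge \mu\}$. Connectivity is preserved when edges are added, so $A$ is an increasing event in the product space $\{0,1\}^{M}$, and $B$ is visibly increasing as well. By the Harris/FKG inequality for product measures, increasing events are positively correlated, $\P[A \cap B] \ge \P[A]\,\P[B]$, and hence
\[
    \P[\,X \ge \mu \mid A\,] \;=\; \frac{\P[A \cap B]}{\P[A]} \;\ge\; \P[B] \;=\; \P[\,X \ge \mu\,].
\]
In words, conditioning on connectivity can only shift probability mass toward larger edge counts, so it suffices to bound the \emph{unconditioned} probability $\P[X \ge \mu]$ from below by $1/3$.

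For the unconditioned bound I would use standard facts about the binomial distribution. The median $m$ of a Binomial$(M,p)$ satisfies $|m-\mu|\le 1$ (a classical result of Kaas and Buhrman), so $\P[X \ge m] \ge 1/2$. The only discrepancy between $\P[X \ge m]$ and $\P[X \ge \mu]$ is the mass on the at most two integers lying between $m$ and $\mu$, and each such atom is bounded by the modal atom $\max_j \P[X = j]$. Since $p \le 1/2$ gives $\mu(1-p) \ge \mu/2$ and $M = \binom{N}{2}$, under the paper's standing assumption $p = \Omega(1/\log n)$ (indeed under anything much weaker) the variance $Mp(1-p)$ diverges, so by Stirling's formula the modal atom is $O\!\left(1/\sqrt{Mp(1-p)}\right) = o(1)$. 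Therefore $\P[X \ge \mu] \ge 1/2 - o(1) \ge 1/3$ for all sufficiently large $N$; equivalently one may simply invoke the central limit theorem to get $\P[X \ge \mu] \to 1/2$. Combining this with the FKG step above proves the lemma.

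The one place that genuinely requires care is that $\mu = p\binom{N}{2}$ need not be an integer, so the cleanest identity ``median equals mean, hence $\P[X \ge \mu] \ge 1/2$'' is unavailable and must be replaced by the $1/2 - o(1)$ estimate; this is precisely why the stated threshold is the slightly slack $1/3$ rather than $1/2$. The conceptual crux, by contrast, is the monotonicity observation: conditioning on one increasing event (connectivity) can only raise the probability of another increasing event (having at least the expected number of edges). Once this is secured via Harris's inequality, the otherwise awkward conditional-on-connectivity measure never has to be analyzed directly, and the problem collapses to a routine binomial tail estimate. The main obstacle, then, is recognizing and correctly invoking this monotone-coupling structure rather than attempting to compute with the conditioned distribution itself.
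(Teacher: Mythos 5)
Your proof is correct in the regime where the paper actually uses the lemma, but it takes a genuinely different route. For the conditioning step, you invoke Harris/FKG positive correlation of the two increasing events (connectivity and large edge count); the paper instead disposes of the conditioning with a one-sentence monotonicity remark ("the probability of being connected increases with $p$"), so your treatment of this step is actually the more rigorous of the two. For the unconditional tail bound, however, the approaches diverge: the paper gives a self-contained elementary computation, showing that the atoms of the edge-count distribution satisfy $\P[|E(G)| = m+j] \geq \tfrac{1}{2}\P[|E(G)| = m-j]$ for all $j \leq m$ (using $p \leq 1/2$), and summing these pairwise comparisons yields $\P[|E(G)| \geq m] \geq \tfrac{1}{3}$ directly and non-asymptotically. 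You instead cite the Kaas--Buhrman median-versus-mean theorem plus a modal-atom bound (or the CLT) to get $\P[X \geq \mu] \geq \tfrac{1}{2} - o(1)$. Your route is shorter and uses standard machinery, but it is weaker in one respect worth flagging: it needs the variance $Mp(1-p)$ to diverge, so it proves the lemma only for sufficiently large $N$ under the paper's standing assumption $p = \Omega(1/\log n)$, not under the lemma's literal hypothesis ``$p \leq 1/2$'' alone, whereas the paper's pairwise atom comparison needs no lower bound on $p$ (modulo integrality of $p\binom{N}{2}$, about which the paper is itself sloppy). Since the lemma is only ever applied with $p = \Omega(1/\log n)$, this gap is harmless here, and you correctly identified and acknowledged it.
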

\begin{proof}
    Let $C$ be the event that $G \in G(N,p)$ is connected and $L$ be the event that $G$ has at least $p\binom{N}{2}$ edges. 
    We want to prove that $P(L|C) = P(L)P(C|L)/P(C) \geq 1/3$. 
    Because $P(C|L)/P(C) \geq 1$, we can prove that $P(L) \geq 1/3$ for an
    arbitrary random graph $G$ in the $G(N,p)$ model.
    We will show that for each $j \leq p\binom{N}{2}$, $G$ is at least half as likely to have $p\binom{N}{2}+j$ edges as $p\binom{N}{2}-j$ edges. 
    Letting $M \ = \  \binom{N}{2}$ and $m = pM$,
    \begin{align*}
        \frac{\P[|E(G)| = m+j]}{\P[|E(G)| = m-j]}
         &\ = \  \frac{\binom{M}{m+j}(\frac{m}{M})^{m+j}(1-\frac{m}{M})^{M-m-j}}
                 {\binom{M}{m-j}(\frac{m}{M})^{m-j}(1-\frac{m}{M})^{M-m+j}} \\
         &\ = \ \frac{(M-m+j)\cdots(M-m-j+1)}{(M-m)^{2j}}
            \frac{m^{2j}}{(m+j)\cdots(m-j+1)} \\
         & \ =\  \frac{1 + \frac{j}{M-m}}{1 + \frac{j}{m}}
         \prod^{j-1}_{i=1} \frac{1 - (\frac{i}{M-m})^2} {1 - (\frac{i}{m})^2} \\
         & \ \geq \ \frac{1}{2} 
    \end{align*}
\end{proof}

\begin{theorem}
    Let $N = nk$. 
    Let $G$ be a random graph in the $G(N,p)$ model conditioned on being connected. 
    Let $T$ be a random spanning tree of $G$.
    If $p = \Omega( 1 / \log n )$, then the probability that $T$ is $k$-splittable is at least 
    $C_k n^{-\frac{k}{2} - O(1)}$,
    where $C_k$ depends only on $k$. More precisely, the probability is at least $C_k n^{\frac{1-k}{2}}e^{-(2+o(1))/p}$.
\label{np}
\end{theorem}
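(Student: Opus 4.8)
The plan is to deduce the $G(N,p)$ bound from the $G(N,m)$ bound of Theorem~\ref{mn} by conditioning on the number of edges. The key observation is that a $G(N,p)$ graph conditioned on having exactly $m$ edges is distributed as $G(N,m)$, and this survives further conditioning on connectivity: conditioning $G(N,p)$ on both connectivity and $|E(G)|=m$ yields exactly $G(N,m)$ conditioned on connectivity. Writing $q_m$ for the probability that the connected $G(N,p)$ graph has $m$ edges, I would decompose the splittability probability as $\sum_m q_m\, s(m)$, where $s(m)$ is the probability that a random spanning tree of a connected $G(N,m)$ graph is $k$-splittable.

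Next I would restrict the sum to the good range $m \ge p\binom{N}{2}$. By the preceding lemma, for $p \le 1/2$ the mass $\sum_{m \ge p\binom{N}{2}} q_m$ is at least the constant $1/3$. For every such $m$ the effective density $p' = m/\binom{N}{2}$ satisfies $p' \ge p = \Omega(1/\log n)$, so $m = \Omega(\binom{N}{2}/\log n)$ and Theorem~\ref{mn} applies, giving $s(m) \ge C_k n^{(1-k)/2} p' e^{-(2+o(1))/p'}$. Since $x \mapsto x e^{-c/x}$ has positive derivative $e^{-c/x}(1 + c/x)$ for positive arguments, this quantity is increasing in the density, so evaluating at the smallest value $p'=p$ in the range yields the uniform bound $s(m) \ge C_k n^{(1-k)/2} p\, e^{-(2+o(1))/p}$ for every $m$ in the good range.

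Combining, the splittability probability is at least $\big(\sum_{m \ge p\binom{N}{2}} q_m\big)\, C_k n^{(1-k)/2} p\, e^{-(2+o(1))/p} \ge \tfrac{1}{3} C_k n^{(1-k)/2} p\, e^{-(2+o(1))/p}$, and the factor $1/3$ is absorbed into $C_k$. Because $p\, e^{-(2+o(1))/p} = n^{-O(1)}$ whenever $p = \Omega(1/\log n)$, this matches the claimed $C_k n^{-k/2 - O(1)}$ bound. The case $p > 1/2$ is not covered by the lemma as stated, but there the binomial $\mathrm{Bin}(\binom{N}{2},p)$ is left-skewed, so being at least its mean has probability exceeding $1/2$; an analogous ratio computation, or simply pinning the good event to a fixed density such as $p_0 = 1/2$, gives the same conclusion in that regime.

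I expect the main obstacle to be the bookkeeping around the conditioning rather than any hard estimate. One must verify that restricting $G(N,p)$ to a fixed edge count commutes with conditioning on connectivity, and, more delicately, that the $o(1)$ error term in the exponent of Theorem~\ref{mn}—which a priori depends on the density $p'$—does not spoil the monotonicity used to compare $s(m)$ across different $m$. Making the monotonicity rigorous requires observing that this correction is a genuine $o(1)$ uniformly over the relevant range of densities, so that the increasing function $p' \mapsto p'\, e^{-(2+o(1))/p'}$ is indeed minimized (up to a vanishing factor) at $p'=p$.
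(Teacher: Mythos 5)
Your proposal is correct and follows essentially the same route as the paper's proof: condition on the edge count so that $G(N,p)$ restricted to $m$ edges and connectivity becomes $G(N,m)$ conditioned on connectivity, use the preceding lemma to get mass at least $1/3$ on the range $m \geq p'\binom{N}{2}$ with $p' = \min(p,1/2)$, apply Theorem~\ref{mn} for each such $m$, and use monotonicity of the bound in the density to evaluate it at $p$, absorbing the $1/3$ into $C_k$. Your extra care about uniformity of the $o(1)$ term and the $p > 1/2$ case only makes explicit what the paper handles tersely via $p' = \min(p, 1/2)$ and its remark that the bound is increasing in $p$.
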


\begin{proof}
    Let $p' = \min(p, 1/2)$. With probability at least $1/3$, $G$ has at least $p'\binom{N}{2}$ edges.   

    For any fixed $m \geq p'\binom{N}{2}$, if we condition on $G$ having $m$ edges, then both $G(N, m)$ and $G(N, p)$ have the uniform distribution over connected graphs. 
    We now apply Theorem~\ref{mn} for each such $m$. Recalling that the bound in Theorem~\ref{mn} is an increasing function of $m/\binom{N}{2}$, the probability that $T$ is $k$-splittable is at least
    \begin{align*}
        C_k n^{\frac{1-k}{2}} e^{-(2+o(1))/p'} = C_k n^{-\frac{k}{2} - O(1)},
    \end{align*}
    where we have absorbed the constants $1/3$ and $p'/p$ into $C_k$. 
    The theorem follows.
\end{proof}

\noindent Note that Theorem~\ref{thm:dense}
follows immediately from Theorems~$\ref{mn}$ and~$\ref{np}$.    

\section{Random Spanning Tree Diameters}
One might ask how these algorithms compare in practice, so we include a table of 
values showing our simulations of the average diameter of random spanning trees on random and complete graphs, and minimum spanning trees of complete graphs with random values. These empirical results helped motivate our questions surrounding the questions of random structures on random graphs.

Each entry in the table is the mean diameter of random spanning trees, taken over $3000$ trials. In the minimum spanning tree (MST) case, a complete graph on $n$ vertices is given random weights uniformly in~$[0,1]$, and the diameter of the minimum spanning tree is taken, and in the other cases, a $G(n,p)$ random graph is created each trial, and the diameter of a uniform random spanning tree is measured. Trials where the random graph was not connected are not included in the average.

We see that the smallest diameters occur on minimum spanning trees, and random spanning trees on $G(n,p)$ graphs with smaller values of $p$ have clearly smaller diameters as $p$ decreases. The excepted diameter of a random minimum spanning tree scales with $O(n^{1/3})$\cite{SOURCE-MST} while the expected diameter of a random spanning tree of $K_n$  scales with $O(n^{1/2})$\cite{SOURCE-UST}, and the scaling limit is unknown for a random spanning tree of a random graph. We note that spanning trees in $G(n,1/2)$ graphs have close enough average diameter to $K_n$ that a difference is not observed in this data.


\begin{table} \caption{Empirical average spanning tree diameters on different randomization models}
\begin{center}
$\begin{array}{c||S|S|S|S|S|S}
  n    & {MST} &{G(n,{log(n)}/{n})}  &  {G(n,{2log(n)}/{n})}  &  {G(n,{1}/{10})}  & {G(n,{1}/{2})}  & {K_n} \\
  \hline
  20 &9.80\ \ \ \ \ \ \  \   &\ \ \ \ \ \ \ \ 10.25 \ \ \ \ \ \ \ \ &\ \ \ \ \ \ \ \  10.37\ \ \ \ \ \ \  \  &\ \ \ \ \ \ \ \   10.10  \ \ \ \ \ \ \ \ &\ \ \ \ \ \ \ \ 10.43\ \ \ \ \ \ \  \ &\ \ \ \ \ \ \ \  10.49  \\
  40 &14.75  & 15.86   & 16.32   & 16.07    & 16.42  &16.34 \\
  80 &21.26  & 24.51   &24.76    & 24.74    &  25.01    & 24.91  \\
\end{array}$ 
\end{center}
\end{table}

\section{Conclusions and Open Problems}

We have shown that random spanning trees of the complete graph are polynomially likely to be splittable, and that the same holds for random spanning trees of dense random graphs. These allow polynomial time sampling techniques to guarantee an inverse polynomial probability of sampling a valid balanced partition. 

It 
remains unknown whether the ratio $\tau(G)/\tau_k(G)$ is bounded by some polynomial for many classes of random graphs. 
It is noteworthy that for small enough $p$, specifically $p=O(n^{-2})$, the probability that a random tree from a random graph conditioned on connectivity is $k$-splittable is inverse polynomial for a trivial reason - the expected number of edges is constant, so the number of spanning trees is bounded by a constant in $G(n,m)$, and bounded by a constant in sufficiently high probability in $G(n,p)$; therefore for every tree $T$, the expectation $\E[1/\tau(G)\,|\,T\in G]$ is bounded below by some inverse polynomial. We conjecture that polynomial probability of splittability will hold for all values of $p$ above the connectivity threshold.

We also observe a link between random spanning trees of random graphs and minimum spanning trees with random weights in the complete graph. We conjecture that random spanning trees of random graphs will have smaller diameter than random spanning trees of the complete graph, as this occurs on random minimum spanning trees  \cite{SOURCE-MST}\cite{SOURCE-UST}     
and in our empirical results on random graphs.

\bibliographystyle{splncs04}
\bibliography{balanced.bib}

\end{document}